\newtheorem{theo}{Theorem}
\newtheorem{ass}{Assumption}
\newcommand{\R}{\mathbb{R}}
\newcommand{\F}{\mathcal{F}}
\begin{document}
	
	\begin{frontmatter}

		\title{Discrete implementations of sliding-mode controllers with barrier-function adaptations require a revised framework\thanksref{footnoteinfo}}		
		\thanks[footnoteinfo]{This paper was not presented at any IFAC meeting.}

		\author[Mex]{Luis Ovalle}\ead{luis.ricardo.ovalle@gmail.com}, 
		\author[Rome]{Andres Gonzalez}\ead{andresglezrod29@gmail.com }, %
		\author[Rome]{Leonid Fridman}\ead{lfridman@unam.mx}, %
		\author[arg]{Hernan Haimovich}\ead{haimovich@cifasis-conicet.gov.ar}, %
		\address[Mex]{TecNM/Instituto Tecnologico Superior de Lerdo, Av. Tecnologico No. 1555 Sur, Ciudad Lerdo, Durango.}             
		\address[Rome]{Universidad Nacional Aut\'{o}noma de M\'{e}xico (UNAM), Facultad de Ingenier\'{i}a, 04510, Mexico City, Mexico.}             
		\address[arg]{Centro Internacional Franco-Argentino de Ciencias de la Informaci\'{o}n y de Sistemas (CIFASIS) CONICET-UNR Rosario, Argentina}   
		\begin{keyword}                           
			Sliding-mode control; robust control; non-autonomous systems.              
		\end{keyword}		
		\begin{abstract}                          
Challenges in the discrete implementation of sliding-mode controllers (SMC) with barrier-function-based adaptations are analyzed, revealing fundamental limitations in conventional design frameworks. It is shown that under uniform sampling, the original continuous-time problem motivating these controllers becomes theoretically unsolvable under standard assumptions. To address this incompatibility, a revised control framework is proposed, explicitly incorporating actuator capacity constraints and sampled-data dynamics. Within this structure, the behavior of barrier function-based adaptive controllers (BFASMC) is rigorously examined, explaining their empirical success in digital implementations. A key theoretical result establishes an explicit relation between the actuator capacity, the sampling rate, and the width of the barrier function, providing a principled means to tune these controllers for different application requirements. This relation enables the resolution of various design problems with direct practical implications. A modified BFASMC is then introduced, systematically leveraging sampling effects to ensure finite-time convergence to a positively invariant predefined set, a key advancement for guaranteeing predictable safety margins. 
		\end{abstract}
		
	\end{frontmatter} 

\section{Introduction}
Controllers ensuring predefined performance, such as predefined performance controllers (PPC) \cite{PPC}, monitoring-function-based controllers \cite{rodrigues2022adaptive}, and barrier function-based adaptive controllers (BFASMC) \cite{BF1}, have gained popularity in the literature. This paradigm confines system behavior within a predefined set \textit{a priori}, regardless of external disturbances, defining what we term the predefined performance problem. Critically, achieving this objective under unknown perturbation bounds needs adaptation mechanisms that do not rely on prior disturbance knowledge.

While classical adaptive sliding-mode controllers (ASMCs) require assumptions on the upper bound of the perturbation or its derivatives to ensure stability \cite{Plestan,shtessel2}, BFASMC uniquely circumvents this limitation. By using barrier functions with vertical asymptotes at the predefined set boundaries, BFASMC guarantees confinement without \textit{a priori} knowledge of perturbation bounds, a capability unmatched by gain-increasing or equivalent-control-based ASMCs.

A major criticism of BFASMC is the requirement for gains that grow unboundedly as state trajectories approach the boundary of the predefined set. However, this characteristic is essential to ensure confinement within the predefined set despite bounded perturbations.

In real-world applications, predefined performance control must be achieved with actuators of finite, \textit{a priori} known capacity. Thus, conditions must exist to prevent system trajectories from approaching the boundary arbitrarily closely. For instance, \cite{TAC} proves that BFASMC not only enforces a barrier but also maintains trajectories within a positively invariant \textit{final set}, ensuring a minimum distance from the boundary. This property must hold for all controllers of this type.

This paper demonstrates that, under standard theoretical conditions, the predefined performance problem cannot be solved under sampling, irrespective of the controller chosen.

To contextualize BFASMC, note that it originates from adaptive sliding modes, where three classical design methods exist \cite{Plestan,shtessel2}:

\begin{itemize}
\item \textbf{Increasing gains} \cite{Moreno}: Controller gains increase until disturbances are compensated, ensuring sliding mode establishment. However, this approach suffers from gain overestimation, leading to chattering, and uncertainty regarding the precise moment when sliding mode is achieved.
\item \textbf{Reconstruction of equivalent control} \cite{utkin,edwards}: A filter estimates the equivalent control, allowing gain adjustments based on the estimated disturbance. This requires upper bounds on perturbations and their derivatives. While second-order sliding modes improve asymptotic precision, chattering remains unavoidable \cite{boiko}.
\item \textbf{Increasing and decreasing gains} \cite{ID2,Gian}: Gains increase until sliding mode is reached and decrease until it is lost, resulting in ultimate boundedness. However, the ultimate bound depends on unknown perturbation bounds, and the precise moment of reaching the bound remains uncertain. Chattering persists due to the enforced sliding mode.
\end{itemize}

A modern perspective on adaptive sliding mode design should simultaneously satisfy three key properties:
\begin{enumerate}
\item Avoid chattering.
\item Allow predefined ultimate performance.
\item Prevent gain overestimation for disturbance compensation.
\end{enumerate}

BFASMC, as proposed in \cite{BF1}, ensures that state trajectories remain within a predefined barrier set of fixed size. The gain design methodology employs continuous concave functions with vertical asymptotes at the barrier width. Since the control law vanishes at the origin, sliding mode is not reached, and chattering does not occur.

While BFASMC has seen success in applications including robot manipulators \cite{BFMAN}, surface vehicles \cite{BFNNSV}, ABS systems \cite{rodrigues2022adaptive}, Duffing oscillators \cite{mousavi2023barrier}, linear motors \cite{BFLM}, and piezoelectric actuators \cite{BFPZE}, among others, the problem formulation in \cite{BF1} considers unit relative degree systems, so reported applications utilize appropriate output functions to reduce the problem to a scalar system.

Despite the promising results observed in these applications, the implementation of BFASMC, as presented in \cite{BF1}, implicitly assumes two main properties: an Infinitely fast response capability and an infinite control authority. Neither assumption holds in practical applications. Existing implementations implicitly rely on analog design assumptions, but sample-and-hold schemes introduce open-loop behavior between sampling instants. This requires a revised framework that formally reconciles barrier-function adaptations with discrete-time dynamics.

This paper shows that, under standard theoretical conditions, the predefined performance problem cannot be solved in a sampled-data setting, regardless of the controller employed. This is a fundamental issue, as sampling disrupts the feedback mechanisms that ensure predefined performance in continuous-time implementations. To address this challenge, we introduce a revised theoretical framework that explicitly accounts for the interaction between sampling, actuator constraints, and barrier function-based adaptation. Within this framework, we establish a fundamental relation between the maximal admissible perturbation, the sampling rate, and the width of the barrier function, which allows tuning of the controller to solve different practical problems. This insight provides a theoretical explanation for the success of reported BFASMC applications, despite their apparent incompatibility with sampling constraints.

Unlike previous works on discrete-time adaptive sliding-mode control, this paper provides a fundamental impossibility result and a revised problem formulation, establishing a rigorous framework for sampled-data BFASMCs.

For the sake of clarity, the key contributions are summarized as follows:

\begin{itemize}
\item A formal definition of the predefined performance problem.
\item Proof of the unsolvability of the predefined performance problem under sampling.
\item Proposal of a sampled version of the predefined performance problem.
\item Study of the BFASMC solution under sampling, demonstrating preservation of its three main features.
\item Design of a modified BFASMC ensuring finite-time convergence to the predefined ultimate bound in sampled controllers.
\end{itemize}

\section{Motivation Example}
In this section BFASMC design will be employed to contextualize the problem of predefined performance control.

\subsection{Considered System}
Consider a system of the form:
\begin{equation}\label{eq:sysprem}
\dot{x}=g(t,x)u+\zeta(t,x)
\end{equation}
where $x(t),u(t)\in\R$ are the state and control input values at time $t$, respectively, $g(t,x)$ is an uncertain input gain and $\zeta(t,x)$ is regarded as a disturbance. The main objective of predefined performance controllers is to ensure that the constraint $|x(t)|<\varepsilon$ is fulfilled for some $\varepsilon>0$, for all $t\geq t_0$, regardless of the effect of external disturbances\footnote{If the problem to be solved is not a scalar system of the form \eqref{eq:sysprem}, it is possible to consider an appropriate output function with relative degree one such that the behavior of the output is predefined, \textit{e.g.} a sliding variable.}. The set $\mathcal{E}=\{x\in\mathbb{R}:|x|\leq\varepsilon\}$ is called the predefined set.

\begin{ass}
  \label{ass:sys} System \eqref{eq:sys} satisfies the following
	\begin{itemize}
		\item The function $\zeta(t,x)$ is continuous with respect to $x$ for fixed $t$ and Lebesgue measurable with respect to $t$ for fixed $x$. Furthermore, there exists some (possibly unknown) positive constant $\bar \zeta\geq 0$ such that $|\zeta(t,x)|<\bar \zeta$ for all $t\geq t_0$ and $|x|\leq \varepsilon$.
		\item There exist two positive constants $g_1$ and $g_2$ such that $0 < g_1\leq g(t,x)\leq g_2$ for all $t>t_0$ and $|x|\leq \varepsilon$.
\end{itemize}\end{ass}
The lower bound of second item of Assumption \ref{ass:sys} implies that it is possible to write \eqref{eq:sysprem} as:
\begin{equation}\label{eq:sys}
	\dot{x}=g(t,x)\left(\delta(t,x)+u \right)
\end{equation}
with $\delta(t,x)=\zeta(t,x)/g(t,x)$. If \eqref{eq:sysprem} fulfills Assumption \ref{ass:sys}, so does \eqref{eq:sys}, for some unknown $\bar\delta=\bar\zeta/g_1$.

\subsection{Barrier-Function Adaptive Control}
The objective of BFASMC is to design a controller which, simultaneously: 1) Constrains the state to a predefined neighborhood of the origin 2) Avoids the chattering phenomenon 3) Avoids knowledge of the bound of the perturbation and its derivative. One BFASMC is the control law presented in \cite{BF1}:
\begin{equation}\label{eq:BFA}
	u = \kappa(x) := -\frac{1}{\varepsilon-|x|}x,
\end{equation}
where $\varepsilon>0$ is a design parameter which defines the so-called barrier set $|x|<\varepsilon$.

The following assumption is standard in BFASMC design.
\begin{figure}\begin{center}
		\includegraphics[width=0.9\linewidth]{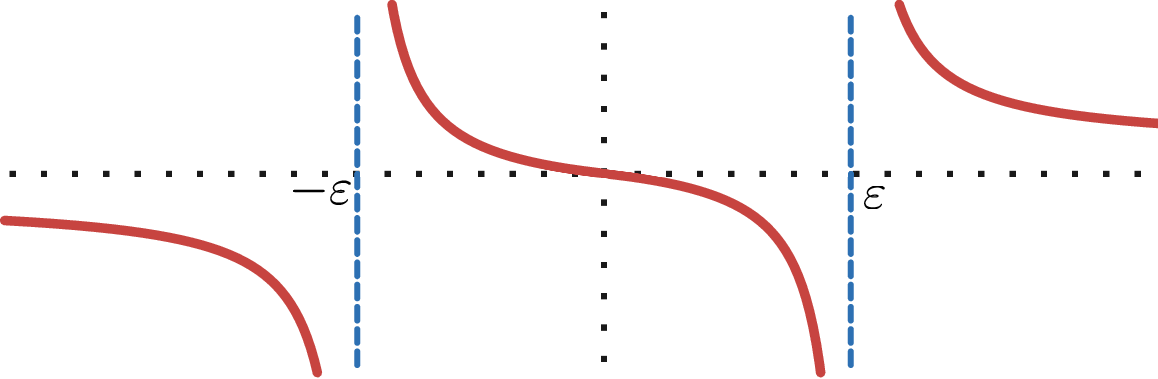}
	\end{center}
	\caption{Graphical representation of the BFASMC}\label{fig:BFA}
\end{figure}

\begin{ass}\label{ass:icc}
$x(t_0)\in\left[-\beta\varepsilon,\beta\varepsilon\right]$, with $\beta\in(0,1)$.
\end{ass}
\begin{rem}
	Control design employing BFASMC involves two stages: convergence to a small region and keeping the state within the region thereafter. The BFASMC is employed for the second stage, and therefore Assumption \ref{ass:icc} has become standard in BFASMC design. In regards to the first stage, adaptation mechanisms, which ensure convergence, always exist, such as the algorithm proposed in \cite{Diego}, which ensures convergence to an arbitrary neighborhood of the origin in a predefined time.
\end{rem}
Figure \ref{fig:BFA} shows a plot of \eqref{eq:BFA} as a function of the state. The main problem is that the sign of the control effort changes once the state leaves the barrier set. In an actual implementation, if the predefined ultimate behavior is not ensured, blind application of such a control law might cause instability.
The strategy presented in \cite{BF1} could be modified easily by switching to some adaptive law whenever the state leaves the predefined region. However, the occurrence of this event would cause the loss of performance guarantees. Furthermore, this issue needs to be considered if one is to properly explain the cited applications, which show successful application of the results in \cite{BF1} without modifying the control law.

Then, considering this set of conditions, the next theorem (taken from \cite{BF1,TAC}) describes the properties of the BFA:
\begin{theo}
Let system \eqref{eq:sys} be controlled by \eqref{eq:BFA}, and Assumptions \ref{ass:sys} and \ref{ass:icc} hold. Then for all $t\geq t_0$, $|x(t)|\leq\max\left\{\beta,\frac{\varphi}{\varphi+1}\right\}\varepsilon<\varepsilon$ with $\varphi = \bar\delta + \theta$ and some $\theta\in(0,1)$. Moreover, there exists a $T \geq 0$ such that for all  $t\geq t_0 + T$, $|x(t)|$ belongs to the set $S_{\frac{\varphi}{\varphi+1}\varepsilon}:= \left\{x:|x|\leq \frac{\varphi}{\varphi+1}\varepsilon\right\}$.
\end{theo}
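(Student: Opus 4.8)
The plan is to use the smooth Lyapunov candidate $V=\tfrac12 x^2$ and to turn the controller's barrier structure into a state-dependent bound on $\dot V$ that becomes strictly negative once $|x|$ exceeds the threshold $\tfrac{\varphi}{\varphi+1}\varepsilon$. First I would compute, along solutions of \eqref{eq:sys} with $u=\kappa(x)$,
\begin{equation*}
\dot V = x\dot x = g(t,x)\left(x\delta(t,x)-\frac{x^2}{\varepsilon-|x|}\right).
\end{equation*}
Using $0<g_1\le g$, $|\delta(t,x)|\le\bar\delta$ and $x^2=|x|^2$, this yields
\begin{equation*}
\dot V \le g(t,x)\,|x|\left(\bar\delta-\frac{|x|}{\varepsilon-|x|}\right),
\end{equation*}
valid on $|x|<\varepsilon$ where $\kappa$ is well defined.

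The next step is to observe that the map $|x|\mapsto\tfrac{|x|}{\varepsilon-|x|}$ is strictly increasing on $[0,\varepsilon)$ and equals $\varphi=\bar\delta+\theta$ precisely at $|x|=\tfrac{\varphi}{\varphi+1}\varepsilon$. Hence for every $|x|\ge\tfrac{\varphi}{\varphi+1}\varepsilon$ one has $\tfrac{|x|}{\varepsilon-|x|}\ge\bar\delta+\theta$, so that
\begin{equation*}
\dot V \le -g_1\theta\,|x| \le -g_1\theta\frac{\varphi}{\varphi+1}\varepsilon =: -c<0 .
\end{equation*}
The strictly positive margin $\theta$ together with the lower bound $g_1$ is exactly what upgrades the sign condition into a \emph{uniform} negative bound; this is the purpose of introducing $\varphi=\bar\delta+\theta$ rather than working directly with $\bar\delta$.

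For the first claim I would run a standard Nagumo-type invariance argument on the smooth sublevel set $\Omega=\{x:V\le\tfrac12(M\varepsilon)^2\}$ with $M=\max\{\beta,\tfrac{\varphi}{\varphi+1}\}$. Since $M\varepsilon\ge\tfrac{\varphi}{\varphi+1}\varepsilon$, the bound above gives $\dot V<0$ on the boundary $\partial\Omega=\{|x|=M\varepsilon\}$, so no trajectory can cross $\partial\Omega$ outward; Assumption \ref{ass:icc} places the initial condition in $\Omega$ (because $\beta\varepsilon\le M\varepsilon$), whence $|x(t)|\le M\varepsilon<\varepsilon$ for all $t\ge t_0$. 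This simultaneously resolves well-posedness: the right-hand side of \eqref{eq:sys} satisfies Carath\'eodory conditions and stays bounded on $\{|x|\le M\varepsilon\}$ (the singularity of $\kappa$ at $|x|=\varepsilon$ is never approached), so the local Carath\'eodory solution extends to all $t\ge t_0$ and $V$ is absolutely continuous, legitimizing the almost-everywhere differentiation above.

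For the second claim, if $|x(t_0)|\le\tfrac{\varphi}{\varphi+1}\varepsilon$ then $T=0$ works by invariance of $S_{\frac{\varphi}{\varphi+1}\varepsilon}$ (the same argument on that smaller sublevel set); otherwise, while $|x|>\tfrac{\varphi}{\varphi+1}\varepsilon$ we have $\dot V\le-c$, so $V$ decreases at least linearly and reaches the level $\tfrac12(\tfrac{\varphi}{\varphi+1}\varepsilon)^2$ no later than $T=\big(V(t_0)-\tfrac12(\tfrac{\varphi}{\varphi+1}\varepsilon)^2\big)/c$, after which invariance of $S_{\frac{\varphi}{\varphi+1}\varepsilon}$ confines the trajectory. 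I expect the main obstacle to be the rigorous treatment of positive invariance in the presence of the control singularity, since existence of solutions and the barrier property are intertwined and one must rule out finite-time escape to $|x|=\varepsilon$; phrasing the invariance through the \emph{smooth} sublevel sets of $V$, rather than through the nonsmooth $|x|$, is what I would use both to sidestep the non-differentiability at the origin and to keep the boundary sign condition clean.
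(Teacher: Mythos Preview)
The paper does not supply its own proof of this theorem; it is quoted as a known result ``taken from \cite{BF1,TAC}'' and immediately used to motivate Problem~\ref{prob:PPP}. There is therefore nothing in the paper to compare against, and your proposal should be read as an independent proof of the cited statement.

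On its own terms your argument is correct and is essentially the standard route used in the BFASMC literature: the quadratic Lyapunov function $V=\tfrac12 x^2$, the estimate $\dot V\le g|x|\big(\bar\delta-\tfrac{|x|}{\varepsilon-|x|}\big)$, the observation that $|x|\mapsto\tfrac{|x|}{\varepsilon-|x|}$ is increasing and hits $\varphi=\bar\delta+\theta$ exactly at $|x|=\tfrac{\varphi}{\varphi+1}\varepsilon$, and the resulting uniform bound $\dot V\le -g_1\theta\,\tfrac{\varphi}{\varphi+1}\varepsilon$ outside $S_{\frac{\varphi}{\varphi+1}\varepsilon}$. Your handling of the two potentially delicate points is also appropriate: you argue invariance via the smooth sublevel sets of $V$ (avoiding the nondifferentiability of $|x|$ at the origin), and you close the loop between existence of solutions and avoidance of the singularity at $|x|=\varepsilon$ by showing that the trajectory never leaves $\{|x|\le M\varepsilon\}$, on which the closed-loop right-hand side is bounded and Carath\'eodory. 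The finite-time reaching bound $T=\big(V(t_0)-\tfrac12(\tfrac{\varphi}{\varphi+1}\varepsilon)^2\big)/c$ follows cleanly from the uniform negative bound on $\dot V$.
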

This motivates the following problem formuation:
\begin{prob}\label{prob:PPP} (Predefined performance problem, PPP) Given some arbitrary  positive constant $\varepsilon$, a control law $u=\eta(t,x)$ is said to solve the PPP for system~(\ref{eq:sys}) if for every perturbation bound $\bar\delta > 0$ there exists $\theta<1$, such that for every $|x(t_0)|\leq \theta\varepsilon$ the set $\F=\left\{x\in\R:|x|\leq\theta\varepsilon\right\}$ is positively invariant for every possible $g(t,x)$ and $\delta(t,x)$ satisfying $|\delta(t,x)| \le \bar\delta$.
\end{prob}

\begin{rem}\label{rem:PPPex}
%
  In order to have a control law solve the PPP, a controller must be able to keep the state bounded by $\varepsilon$ irrespective of the effect of \underline{any} bounded perturbation. Therefore, if the norm of the state, $|x|$, approaches the predefined-performance value $\varepsilon$, the control action must tend to infinity. However, any bounded perturbation will be compensated by a bounded control action, since $\mathcal{F}\subset\mathcal{E}$ is rendered positively invariant; thus, $\mathcal{F}$ will be referred to as the final set.   
The existence of the final set actually gives some leeway in the sense of ensuring that some level of ignored uncertainties, such as measurement noise, can be tolerated, albeit proper further analysis is still needed.
\end{rem}


\subsection{Unsolvability of the PPP under Sampling}

It will be shown that the PPP, Problem \ref{prob:PPP}, cannot be solved under sampling. Let $\{t_k\}_{k=0}^\infty$ be any increasing sequence of sampling instants. Under zero-order hold, one has
\begin{align}
  u(t) = u_k\quad t \in [t_k,t_{k+1})
\end{align}
where $u_k$ is the constant control action applied between the sampling instants $t_k$ and $t_{k+1}$. In this context, defining $x_k = x(t_k)$, the solution to~(\ref{eq:sys}) satisfies
\begin{equation}\label{eq:CLC3}
\begin{array}{rl}
	x_{k+1}&=x_k + \int_{t_k}^{t_k+1}g(s,x(s))\left(u_k+\delta(s,x(s))ds\right)
\end{array}
\end{equation}
\begin{prop}\label{prop:del} Let $0\le t_0 < t_1$ and $|x_0|\le\theta\varepsilon$ for $\theta\in[0,1)$. Let $u_0$ denote the control action computed at $t_0$, to be applied through zero-order hold in the interval $[t_0,t_1)$. Then, there exists some bounded $\delta(t,x)$ such that according to \eqref{eq:CLC3}, $x_1\not\in(-\varepsilon,\varepsilon)$.
\end{prop}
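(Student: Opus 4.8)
The plan is to exploit the one structural weakness that the zero-order hold introduces: throughout the interval $[t_0,t_1)$ the control is frozen at the single finite value $u_0$, whereas the disturbance may be chosen freely and, crucially, \emph{after} $u_0$ has been fixed. Because $t_1-t_0>0$ and $g(t,x)\ge g_1>0$ on $\mathcal{E}$, a bounded $\delta$ can hold the velocity $\dot{x}=g(t,x)(u_0+\delta)$ bounded away from zero in a prescribed direction for the entire interval. I would steer the state toward $+\varepsilon$; the case $-\varepsilon$ is symmetric.

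Concretely, I would take the constant disturbance $\delta(t,x)=D$ on $\mathcal{E}$ with $D>0$ large. While the trajectory stays in $\mathcal{E}$, Assumption~\ref{ass:sys} gives $\dot{x}=g(t,x)(u_0+D)\ge g_1(u_0+D)>0$ once $D$ is large enough that $u_0+D>0$; since $u_0$ is a fixed real number and $D$ is unrestricted, the quantity $g_1(u_0+D)$ can be made as large as desired. A comparison estimate then yields $x(s)\ge x_0+g_1(u_0+D)(s-t_0)$ for all $s$ up to the first exit from $\mathcal{E}$. Choosing $D$ so that $g_1(u_0+D)(t_1-t_0)\ge(1+\theta)\varepsilon$ forces $x$ to attain the level $\varepsilon$ at some instant $s^{*}\le t_1$, even in the extreme case $x_0=-\theta\varepsilon$ where the gap to the boundary is largest. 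This is the quantitative heart of the argument and the only place where the positive length of the sampling interval is used.

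The step I expect to be the real obstacle is not reaching $\varepsilon$ but certifying that $x_1=x(t_1)$ \emph{itself} lies outside $(-\varepsilon,\varepsilon)$: once the trajectory crosses the boundary, Assumption~\ref{ass:sys} no longer constrains $g$, so I can no longer argue that $x$ keeps increasing. I would remove this difficulty by redefining the disturbance beyond the boundary so that the state cannot return, setting $\delta(t,x)=-u_0$ for $x>\varepsilon$. This makes $\dot{x}=g(t,x)(u_0-u_0)=0$ regardless of the unconstrained value of $g$ there, so once $x$ reaches $\varepsilon$ it stays; hence $x_1=\varepsilon\notin(-\varepsilon,\varepsilon)$. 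The disturbance so constructed is bounded by $\max\{D,|u_0|\}$, which is all the statement asks for; the only mild subtlety, the discontinuity of $\delta$ at the level $x=\varepsilon$, is harmless because the vector field there drives the state onto the boundary and holds it, so~\eqref{eq:CLC3} yields $x(t)=\varepsilon$ for $t\ge s^{*}$.

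I would close by emphasizing the quantifier order that turns this construction into an impossibility statement: $u_0$ is an arbitrary value that the controller computes at $t_0$, and the escaping $\delta$ is built as a function of that $u_0$. Thus no feedback law that produces a single hold value $u_0$ at $t_0$ can prevent the exit, which is exactly what Problem~\ref{prob:PPP} would require to hold under sampling.
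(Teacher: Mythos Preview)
Your proposal is correct and follows essentially the same approach as the paper: both arguments exploit that $u_0$ is frozen while $\delta$ may be chosen afterward, and both pick a constant disturbance large enough (the paper uses $\delta\equiv-\varpi$ with $\varpi\ge \tfrac{2\varepsilon}{g_1\tau}+u_0$ to drive $x$ below $-\varepsilon$, you use $\delta\equiv D$ with $g_1(u_0+D)\tau\ge(1+\theta)\varepsilon$ to drive it above $\varepsilon$) so that the lower bound $g\ge g_1$ on $\mathcal{E}$ forces exit within one sampling interval. Your additional piecewise definition $\delta=-u_0$ for $x>\varepsilon$, freezing the state at the boundary, is a refinement the paper does not make---it simply applies $g\ge g_1$ along the whole trajectory without commenting on the region $|x|>\varepsilon$---so your treatment is, if anything, more careful on that point.
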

\begin{proof}
	
Let us assume $\delta(t,x)=-\varpi$, for some $\varpi>u_0$ such that $x_1-x_0=-\left(\varpi-u_0\right)\int_{t_0}^{t_1}g(s,x(s)<0$, which means that $x_1<x_0$. Since $|x_0|\leq\theta\varepsilon$, the only possibility for $x_1\not\in(-\varepsilon,\varepsilon)$ is $x_1<-\varepsilon$. Define $\tau=t_1-t_0$, it suffices $\frac{\varepsilon}{g_1\tau}(1+\theta)+ u_0\leq\varpi$, which follows from $g(t,x(t))\geq g_1$ and is feasible if $\varpi\geq\frac{\varepsilon}{g_1\tau}(1+\theta)+u_0$. Choosing $\varpi\geq \frac {2\varepsilon}{g_1\tau}+u_0$,  $x_1\not\in(-\varepsilon,\varepsilon)$. Any $\delta(t,x)=-\rho\varpi$, with $\rho\geq 1$, leads to $x_1\not\in(-\varepsilon,\varepsilon)$.\end{proof}

Since~(\ref{eq:CLC3}) expresses the solution of (\ref{eq:sys}) at two consecutive sampling instants, Proposition~\ref{prop:del} shows that without prior knowledge of a bound for the perturbation, $x(t)\in\mathcal{E}$ cannot be ensured under sampling, regardless of the complexity of the control law. Therefore, the PPP cannot be solved \underline{by any controller} under any form of sampling. 
%
The impossibility of solving the PPP under sampling implies that the formal problem to be solved needs to be modified to reconcile the success of the practical applications of BFASMC reported in the literature with the theoretical analysis employed. The next section aims to propose a theoretical context in which BFASMC can be studied with the aim to explain, from a theoretical point-of-view, the precise guarantees that the reported experimental results have.

\section{Predefined Performance Problem under Sampling}
The main unrealistic feature of the PPP is the fact that the control must be able to compensate every bounded perturbation, irrespective of its (possibly unknown) bound, $\bar\delta$. In reality, the control authority of any system is finite. As a consequence, the maximum bound for perturbations that can be successfully compensated is predetermined by the actuator capacity of the system or, from the control system design point of view, the actuator would have to be selected based on knowledge of some perturbation bounds. Moreover, any practical implementation should account for sampling.

Since uniform sampling, \textit{i.e.} sampling that occurs at equally spaced time instants, is the type employed in the reported applications of BFASMC, in the sequel it will be assumed that $u$ is implemented by means of a sample-and-hold method under uniform sampling,\textit{ i.e.}
$ u(t)=u\left(\tau\left\lfloor\frac{t-t_0}{\tau}\right\rfloor\right),$
for all $t\geq t_0$, where $\tau$ represents some constant sampling period and $\lfloor\cdot\rfloor$ represents the floor function, \textit{i.e.} the greatest integer not greater than its argument.
In this practical implementation context, a reasonable modification of the PPP is the following:
\begin{prob} \label{prob:PPPS}(PPP under uniform sampling, PPPS) A control law $u=\eta(t,x)$ is said to solve the PPPS for system~(\ref{eq:sys}) with predefined ultimate bound $\varepsilon$, maximum admissible perturbation $\bar\delta$ and sampling period $\tau$, if for every perturbation bound $\bar\vartheta \in (0, \bar\delta)$ there exists some $\theta<1$, such that for every $|x(t_0)|\leq \theta\varepsilon$ the set $\F=\left\{x\in\R:|x|\leq\theta\varepsilon\right\}$ is rendered positively invariant for every possible $g(t,x)$ and $\delta(t,x)$ satisfying $|\delta(t,x)|\le \bar\vartheta$, when the control law is implemented under uniform sampling and zero-order hold with a sample rate of $\tau$ as:
  $ u(t)=\eta\left(\tau\left\lfloor\frac{t-t_0}{\tau}\right\rfloor,
  x\left(\tau\left\lfloor\frac{t-t_0}{\tau} \right\rfloor\right)\right)
  \quad \forall t\ge t_0.$
\end{prob}

  Unlike the PPP, the PPPS does not require the controller to become unbounded because the maximum admissible value of the perturbation, namely $\bar\delta$, is assumed to be known.

The consideration of the PPPS motivates the following modification of Assumption \ref{ass:sys}:
	\begin{ass}\label{ass:sys1} System \eqref{eq:sys} satisfies:
		\begin{itemize}
			\item There exist two positive constants $g_1$ and $g_2$ such that $0 < g_1\leq g(t,x)\leq g_2$ for all $t>t_0$ and $|x|\leq \varepsilon$.
			\item There exists some positive constant, $c_1>0$, such that the control signal is upper bounded as $|u(t)|\leq c_1$.
			\item There exists some $\bar \delta<c_1$ such that $|\delta(t,x)|\leq \bar \delta$
	\end{itemize}
\end{ass}
The first item of this assumption is motivated by Assumption \ref{ass:sys}. The second item assumes that the control effort will be bounded, while the third implies that the actuator is strong enough to compensate the disturbance, which is always reasonable. Item 3 of Assumption \ref{ass:sys1} implies some approximate knowledge on the bound of the perturbation. Nevertheless, this is the worst possible disturbance that can be tolerated. As such, the overestimation of the gain might still be a problem which has to be addressed.
\begin{rem}\label{rem:tasks}
	Given a control law $u=\eta(t,x)$ that solves the PPP under Assumption~\ref{ass:sys}, the analysis of the PPPS under Assumption \ref{ass:sys1} leads to the fulfillment of one of three distinct tasks:
	\begin{enumerate}
		\item Discretization: Given $(\varepsilon,c_1)$ find $\tau$ such that $u=\eta(t,x)$ solves the PPPS.
		\item Design: Given $(\tau, c_1)$, find the minimal acceptable $\varepsilon$ so that $u=\eta(t,x)$ solves the PPPS.
		\item Feasibility: Given $(\varepsilon,\tau)$, find the minimal acceptable $c_1$.
	\end{enumerate}
Note that the solution to these tasks implies a relation between the values $(\varepsilon,\tau, c_1)$.
\end{rem}

\begin{rem} Problem \ref{prob:PPPS} modifies the conditions of Problem \ref{prob:PPP} by assuming a finite maximum admissible value for the perturbation $\delta(t,x)$. Considering a finite control authority is not done just for the sake of taking into account saturation of the actuator but rather as a way to bridge the gap between the theoretical impossibility of solving the PPP under uniform sampling and the applications reported in the literature. Furthermore, considering a maximum actuator capacity does not require knowledge of the tightest bound of the perturbation but rather only of its maximum worst-case value $\bar\delta$. The rationale is that if the tightest perturbation bound $\bar\vartheta$ is actually much lower than $\bar\delta$, then the control action will adapt to this lower bound without without overestimation. 
\end{rem}

\subsection{PPPS and BFASMC}

Next, the control law $u=\eta(t,x) = \kappa(x)$ given in (\ref{eq:BFA}) is considered. In this case, the state space can be divided into four sections see Fig. \ref{fig:div}


These divisions can be seen in Figure \ref{fig:div}. Note the horizontal symmetry of the proposed divisions. In the sequel we will consider $x_k>0$ without loss of generality, which is motivated from the symmetry of the problem.

\begin{figure}
	\begin{center}
\includegraphics[width=\linewidth]{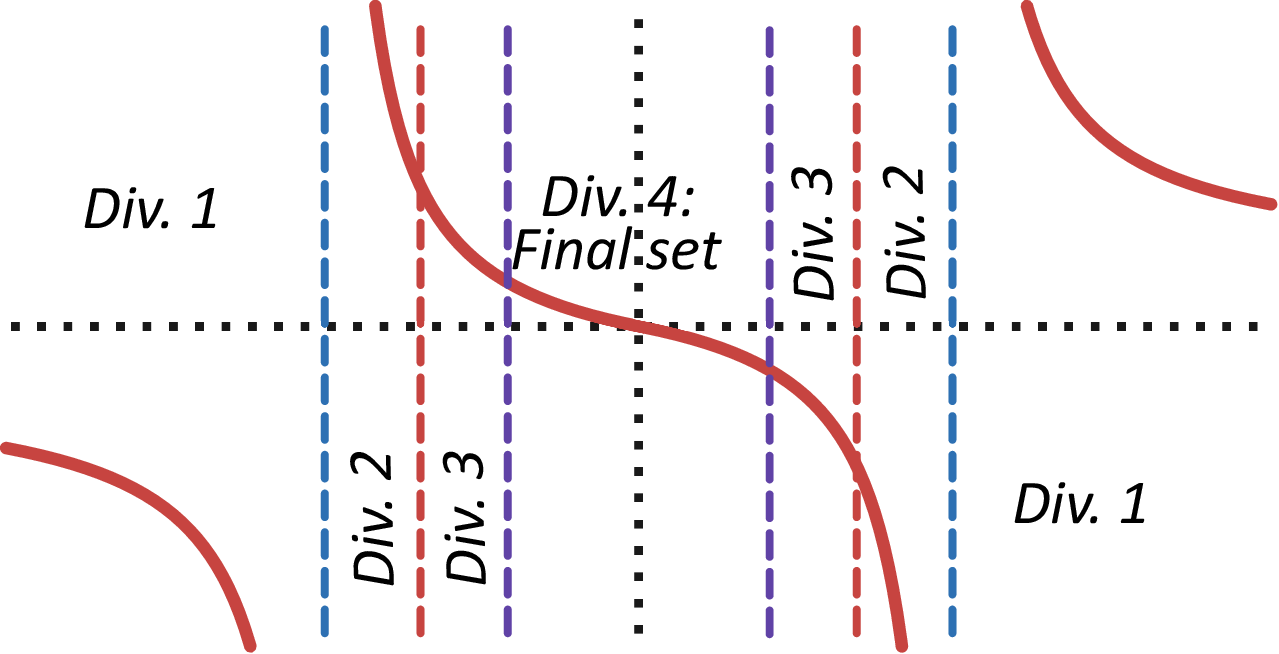}
\caption{Divisions of the state-space: [Div. 1] unstable region [Div. 2] saturation region [Div. 3] region where $|u|>|\delta(t,x)|$  [Div. 4] region where nothing can be ensured but might serve as a final set }\label{fig:div}
\end{center}
\end{figure}

\begin{ass}\label{ass:ic}
$x_0\in$ Div. 3$\cup$ Div. 4.
\end{ass}
\begin{rem}
This assumption is analogous in this case to Assumption \ref{ass:icc} in continuous time while also avoiding the region of saturation. Thus, it will be proven that, for any initial condition which does not saturate the actuators, the solutions of the system can render some final set positively invariant, \textit{i.e.} controller \eqref{eq:BFA} under sampling solves the PPPS.
\end{rem}
Let us consider the following proposition
\begin{prop}\label{prop:sppp}
	Consider the system \eqref{eq:sys} with the control law \eqref{eq:BFA} implemented through zero-order hold under uniform sampling with period $\tau$. Let Assumptions \ref{ass:sys1} and \ref{ass:ic} hold. If $\tau<\varepsilon\left(\frac{1}{g_2c_1+1}\right)^2$, then, Div. 4 is at least asymptotically attractive and invariant.
\end{prop}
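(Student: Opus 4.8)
The plan is to work entirely with the sampled sequence $x_k = x(t_k)$ through the one-step relation \eqref{eq:CLC3}, exploiting the horizontal symmetry to treat only $x_k>0$. First I would record the two abscissae that separate the divisions for $x>0$: the \emph{disturbance-balance} level $x_\delta = \frac{\bar\delta}{1+\bar\delta}\varepsilon$, at which the unsaturated law \eqref{eq:BFA} satisfies $|\kappa(x)|=\bar\delta$, and the \emph{saturation} level $x_s=\frac{c_1}{1+c_1}\varepsilon$, at which $|\kappa(x)|=c_1$; since $\bar\delta<c_1$ these obey $0<x_\delta<x_s<\varepsilon$, so that Div.~4 $=\{|x|\le x_\delta\}$, Div.~3 $=\{x_\delta<|x|<x_s\}$, and the saturation/unstable band lies above $x_s$. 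Bounding the integrand of \eqref{eq:CLC3} with $g_1\le g\le g_2$ and $|\delta|\le\bar\delta$ then yields, for $x_k>0$ with $u_k=\kappa(x_k)$, the crude magnitude bound $|x_{k+1}-x_k|\le g_2(|\kappa(x_k)|+\bar\delta)\tau$ together with the sign-refined estimates $x_{k+1}\le x_k+g_2(\kappa(x_k)+\bar\delta)\tau$ when $\kappa(x_k)+\bar\delta\ge0$ (the case inside Div.~4) and $x_{k+1}\le x_k+g_1(\kappa(x_k)+\bar\delta)\tau<x_k$ when $\kappa(x_k)+\bar\delta\le0$ (the case inside Div.~3).

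For \textbf{attractivity} I would take $x_k\in$ Div.~3. There $|\kappa(x_k)|>\bar\delta$, so even under the worst outward disturbance $\delta\equiv\bar\delta$ one has $\kappa(x_k)+\bar\delta<0$ and the refined bound gives the strict decrease $x_k-x_{k+1}\ge g_1(|\kappa(x_k)|-\bar\delta)\tau>0$ while the state remains in Div.~3. Because this decrement vanishes only as $x_k\downarrow x_\delta$, the sampled sequence cannot stall above $x_\delta$ and must enter Div.~4 (at least asymptotically). I would also use the magnitude bound with the stated $\tau$ to confirm that the largest admissible one-step displacement $g_2(c_1+\bar\delta)\tau$ is smaller than $\varepsilon$, so that trajectories started in Div.~3 $\cup$ Div.~4 never leave $\mathcal{E}$, the overshoots across the origin being symmetric and, by the same decrement estimate, non-increasing in amplitude.

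The crux is \textbf{invariance of Div.~4}. For $x_k\in[0,x_\delta]$ the held control is weak, $0\le-\kappa(x_k)\le\bar\delta$, so the worst disturbance drives the state \emph{up}; the relevant object is the upper-envelope map $f(x):=x+g_2\tau\bigl(\bar\delta-\tfrac{x}{\varepsilon-x}\bigr)$, which upper-bounds $x_{k+1}$ on this interval and satisfies $f(x_\delta)=x_\delta$. I would show that the sampling condition forces $f$ to be nondecreasing on $[0,x_\delta]$, whence $x_{k+1}\le f(x_k)\le f(x_\delta)=x_\delta$ and Div.~4 is one-step positively invariant, the matching downward excursion being governed by the weaker linear bound and hence non-binding. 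Monotonicity amounts to $f'(x)=1-\frac{g_2\tau\varepsilon}{(\varepsilon-x)^2}\ge0$ on $[0,x_\delta]$, i.e. $(\varepsilon-x)^2\ge g_2\tau\varepsilon$; its tightest point is $x=x_\delta$, where $\varepsilon-x_\delta=\frac{\varepsilon}{1+\bar\delta}$. The \emph{quadratic} denominator of the stated threshold reflects exactly this squared barrier slope $(\varepsilon-x)^{-2}$ at the division boundary, and $\tau<\varepsilon\bigl(\frac{1}{g_2c_1+1}\bigr)^2$ is the sufficient form obtained after the conservative substitutions $\bar\delta<c_1$ and $g\le g_2$.

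The step I expect to fight hardest is precisely this invariance estimate, because it is where the zero-order hold acts as an open loop: once $u_k$ is frozen at its small Div.~4 value it cannot react as $x$ climbs, so the outward excursion is controlled not by the instantaneous gain but by the \emph{slope} of \eqref{eq:BFA} accumulated over the interval, which blows up like $(\varepsilon-x)^{-2}$. Turning the heuristic that the frozen control must still outrun the displacement it permits into the clean monotonicity inequality — and verifying that the stated $\tau$-bound is genuinely sufficient, rather than merely necessary at the boundary — is the delicate part. The attractivity and the containment in $\mathcal{E}$ are then comparatively routine consequences of the same per-step displacement bound.
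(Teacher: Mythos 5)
Your overall architecture matches the paper's: reduction to $x_k>0$ by symmetry, the division boundaries $x_\delta=\frac{\bar\delta}{1+\bar\delta}\varepsilon$ and $x_s=\frac{c_1}{1+c_1}\varepsilon$, strict decrease while in Div.~3, and one-step invariance of Div.~4. Your invariance step is in fact a cleaner packaging of the paper's: the paper parametrizes $x_k=\frac{\alpha\bar\delta}{\alpha\bar\delta+1}\varepsilon$ and checks by direct algebra that the worst upward displacement $g_2\tau\bar\delta(1-\alpha)$ is below the distance to the Div.~4 boundary, which is literally your inequality $f(x_k)\le f(x_\delta)=x_\delta$; both routes produce the same sufficient condition $\tau\le\varepsilon/(g_2(1+\bar\delta)^2)$. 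Your dismissal of the downward excursion inside Div.~4 as non-binding is also correct (it is the paper's Subcase~1, and its condition is indeed implied by the upward one), though you assert it rather than verify it.

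The genuine gap is in the Div.~3 step. What must be excluded there is not leaving $\mathcal{E}$ but \emph{crossing past} $-x_\delta$: the paper explicitly imposes $x_{k+1}>-\varepsilon\frac{\bar\delta}{\bar\delta+1}$, i.e.\ that a trajectory cannot traverse Div.~4 in a single sampling interval. Your substitute --- the crude bound $g_2(c_1+\bar\delta)\tau<\varepsilon$, hence containment in $\mathcal{E}$ --- does not do this job: under that bound alone a state just above $x_\delta$ could land in the negative saturation band Div.~2, which lies \emph{inside} $\mathcal{E}$. There $|\kappa(x)|>c_1$, so the control bound of Assumption~\ref{ass:sys1} is violated, your per-step displacement estimate $g_2(|\kappa(x_k)|+\bar\delta)\tau$ no longer applies, and the rebound (with $|\kappa(x)|\to\infty$ as $|x|\to\varepsilon$) can eject the state from $\mathcal{E}$ entirely. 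The claim that overshoots across the origin are ``non-increasing in amplitude by the same decrement estimate'' is exactly what needs proof here, and it does not follow from the decrement, which only gives $x_{k+1}\le x_k$. The repair is available with your own tool: apply the monotonicity argument to the lower envelope $h(x)=x-g_2\tau\bigl(\bar\delta+\frac{x}{\varepsilon-x}\bigr)$ on $(x_\delta,x_s)$; the same type of condition ($g_2\tau\le\varepsilon/(1+c_1)^2$, which is what the paper's Case~1 derives) gives $h'\ge 0$ there, hence $x_{k+1}\ge h(x_k)\ge h(x_\delta)=x_\delta-2g_2\tau\bar\delta>-x_\delta$, and the state lands in Div.~3\,$\cup$\,Div.~4 so the induction closes. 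One last point: your derived thresholds ($\varepsilon/(g_2(1+\bar\delta)^2)$ and, after the repair, $\varepsilon/(g_2(1+c_1)^2)$) are not implied by the stated $\varepsilon/(g_2c_1+1)^2$ for all values of $g_2$, so the ``conservative substitution'' step is not airtight --- but this bookkeeping slip is inherited from the paper, whose own proof derives conditions of the same form and states a different one.
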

\begin{proof}
Without loss of generality, consider $x_k>0$ (this is possible from the horizontal symmetry in Figure \ref{fig:div}). Therefore, the closed-loop system can be written as: $
x_{k+1}-x_k=\int_{t_k}^{t_k+\tau}g(s))\left(\delta(s) -\frac{x_k}{\varepsilon-|x_k|}\right) ds $. The dependence on $x(t)$ of $g(t,x(t))$ and $\delta(t,x(t))$ will be omitted for the sake of readability.

 The proof is done by cases:
 
Case 1: $x_k\in$ Div. 3.
	
	In this case $c_1\geq\frac{x_k}{\varepsilon-|x_k|}\geq\bar \delta$ and, from the structure of $x_{k+1}-x_k$, $x_{k+1}-x_k\leq 0$ which means that $x_{k+1}$ must decrease. Thus, it should follow that $x_{k+1}-x_k>-g_2\tau(\bar \delta + c_1)$, this is a straight line which bounds $x(t)$ for $t\in[t_k,t_{k+1}]$. Then, one should ensure that $x_{k+1}>-\varepsilon \frac{\bar \delta}{\bar \delta +1}$,  which implies $x_{k+1}-x_k>-\varepsilon\left(\frac{\bar\delta}{1+\bar\delta}+\frac{c_1}{c_1+1}\right)$, this is the minimum admissible displacement form $x_k$ to $x_{k+1}$. Thus, one should find $\tau>0$ such that $\varepsilon\left(\frac{\bar\delta}{1+\bar\delta}+\frac{c_1}{c_1+1}\right)>g_2\tau(\bar \delta + c_1)$. Note that $\bar\delta\leq c_1$ implies the existence of a \begin{color}{red}$\vartheta>1$\end{color} such that $c_1=\vartheta\bar\delta$. Then, in this case we should ensure that \begin{color}{red} $
	g_2\tau<\varepsilon\frac{1}{(\bar \delta+1)(\vartheta\bar \delta+1)}\left(1+\frac{2\bar\delta\vartheta}{\vartheta+1}\right)$, which is feasible if the restriction  $
	g_2\tau< \varepsilon \frac{ 1}{(\vartheta\bar \delta + 1)^2}=\varepsilon \frac{ 1}{(c_1 + 1)^2}$ is imposed. \end{color}
	Furthermore, note that the condition $x_{k+1}-x_k\leq 0$ implies that the trajectories of the system must move away from the boundary of Div. 3, which means that, at least, Div. 4 is asymptotically attractive as the trajectories cannot cross Div. 3$\cup$Div. 4 completely in one sample interval if $\tau$ is chosen this way.
	
	Case 2: $x_k\in$ Div. 4. Since nothing can be directly established in this case, two subcases will be considered: 
	
	\underline{Subcase 1}: $x_k\in$Div.4\&$\int_{t}^{t+\tau}g(s)\left(\delta(s)-\frac{x_k}{\varepsilon-|x_k|}\right)ds<0.$  In this subcase we must have $x_k>x_{k+1}$, then from the fact that $x_k\in$ Div. 4, $x_k=\frac{\alpha\bar \delta}{\alpha \bar \delta +1}\varepsilon$ for some $\alpha\in[0,1)$. On the other hand, $\int_{t}^{t+\tau}g(s)\left(\delta(s)-\frac{x_k}{\varepsilon-|x_k|}\right)ds\geq -g_2\tau(1+\alpha)\bar\delta.$ 
	
	This means that the dynamics of the system can be lower bounded by a constant term and there exists a straight line which bounds $x(t)$ from below for $t\in[t_k,t_{k+1}]$. 	Then, to have $x_{k+1}\geq -\frac{\bar \delta}{\bar \delta+1}\varepsilon$, it suffices to find $\tau>0$ such that $\varepsilon\left(\frac{\alpha \bar \delta}{\alpha\bar \delta+1}+ \frac{\bar \delta}{\bar \delta +1}\right)>g_2\tau\bar \delta (\alpha+1)\implies \varepsilon \frac{1+ \alpha + 2 \alpha\bar \delta}{(\alpha\bar \delta+1)(\bar \delta+1)}>g_2\tau (\alpha+1)\implies \varepsilon\frac{1}{(\alpha\bar \delta+1)(\bar \delta+1)}\left(1+2\bar \delta \frac{\alpha}{\alpha+1}\right)>g_2\tau$  which is feasible if $ \varepsilon\frac{1}{g_2(\bar \delta+1)^2}>\tau$.
	
	\underline{Subcase 2}: $x_k\in$Div.4\&$\int_{t}^{t+\tau}g(s)\left(\delta(s)-\frac{x_k}{\varepsilon-|x_k|}\right)ds>0.$ In this subcase we must have $x_k<x_{k+1}$, then from the fact that $x_k\in$ Div. 4, $x_k=\frac{\alpha\bar \delta}{\alpha \bar \delta +1}\varepsilon$ for some $\alpha\in[0,1)$. Thus, $\frac{\bar \delta}{\bar \delta+1}\varepsilon-\frac{\alpha\bar \delta}{\alpha\bar \delta + 1}\varepsilon>x_{k+1}-x_k$.
	On the other hand, consider $\int_{t}^{t+\tau}g(s)\left(\delta(s)-\frac{x_k}{\varepsilon-|x_k|}\right)ds\leq g_2\tau(1+\alpha)\bar\delta.$ This means that the dynamics of the system can be upper bounded by a constant term and there exists a straight line which bounds $x(t)$ from above for $t\in[t_k,t_{k+1}]$.  Then, to have $x_{k+1}\leq \frac{\bar \delta}{\bar \delta+1}\varepsilon$, it suffices to find $\tau>0$ such that $\varepsilon\left( \frac{\bar \delta}{\bar \delta +1}-\frac{\alpha \bar \delta}{\alpha\bar \delta+1}\right)>g_2\tau\bar \delta (\alpha-1)\implies \varepsilon \frac{ 1-\alpha}{(\alpha\bar \delta+1)(\bar \delta+1)}>g_2\tau (1-\alpha)\implies \varepsilon\frac{1}{(\alpha\bar \delta+1)(\bar \delta+1)}\left(1+2\bar \delta \frac{\alpha}{\alpha+1}\right)>g_2\tau$  which means $ \varepsilon\frac{1}{g_2(\bar \delta+1)^2}>\tau$.

Notice that both cases consider straight lines which bound $x(t)$ for $t\in[t_k,t_{k+1}]$. This means that the arguments do not only consider the behavior of the system at $t=t_{k+1}$, but rather, the behavior of $x(t)$, for any $t\in[t_k,t_{k+1}]$, is ensured.

Then, assuming that $c_1>\bar \delta$ and that the exact value of $\bar \delta$ is not known, one can always choose $\tau< \varepsilon\frac{1}{(\vartheta\bar\delta+1)^2}=\varepsilon\frac{1}{(c_1+1)^2}$ to ensure that Div. 4 is a positively invariant set.
\end{proof}

\begin{rem}Proposition \ref{prop:sppp} characterizes a relation between $\varepsilon$, $\tau$ and $c_1$. As such, to solve any of the tasks mentioned in Remark \ref{rem:tasks}, it is only necessary to solve for the corresponding variable.
Furthermore, this relation  ensures that  Div. 4 is in fact a final set. Thus, the distance between the boundary of the barrier set and the ultimate bound on the state is bounded away from zero. This implies the boundedness of the control law and thus, that the second item of Assumption \ref{ass:sys1} is always fulfilled.
\end{rem}
\begin{rem}
The necessity of both bounds on the input gain is displayed. If $g_1\to0$, the perturbation cannot be coupled to the control input and the compensation of $\delta$ by means of a bounded control is not possible. If $g_2\to\infty$, then there will not exist some finite value of $\tau$ which ensures that the predefined behavior of the system can be kept, see Lemma 1 in \cite{hernan}.
\end{rem}

\begin{rem}
The condition $|\delta(t,x)|\leq \bar\delta$ can be replaced by $\left|\int_{t}^{t+\tau}\delta(s,x(s))ds\right|\leq \bar\delta\tau$ for all $t$, and the proof follows without change. However, this would only mean that $x_{k+1}\in\mathcal{E}$ and nothing would be ensured between these instants, \textit{i.e.} $\mathcal{F}$ would not be rendered invariant. 
\end{rem}
Controller \eqref{eq:BFA}, under the PPPS, displays three main features: (I)It avoids chattering, as $u(0)=0$. (II)It allows the predefinition of ultimate performance. (III)It avoids the overestimation of the gain used for compensating disturbances.
That is to say, considering the PPPS, the proposed selection of $\tau$ allows for all of the main features of BFASMC to be recovered under sampling.

The presented analysis implies that any practical application of \eqref{eq:BFA} under sampling must in fact consider Assumption \ref{ass:sys1} instead of Assumption \ref{ass:sys}. The difference between these assumptions is subtle, and even more so in a practical application, as any real-world problem cannot consider the compensation of bounded disturbances with an arbitrary bound. The difference in the underlying Assumptions is what causes the disparity between the PPP and the practical applications.
\begin{rem} 
In continuous time, \cite{Diego} shows that it is always possible to steer the trajectories of the system, in a predefined time, to an arbitrary neighborhood of the origin by means of a controller with a time-dependent gain which grows with time. However, since a finite control authority is assumed, this argument needs to be modified because the limited control authority actually restricts the maximum velocity of the trajectories of the system.
\end{rem}

\section{Modifying the BFA}
Up until this point the initial conditions of the system have been considered under Assumption \ref{ass:ic}. Nonetheless nothing has been said about a mechanism which drives the trajectories of the system towards this set.

In the continuous time setting, \cite{Diego} shows that it is always possible to steer the trajectories of the system to an arbitrary neighborhood of the origin by means of a proportional navigation feedback-based controller. An important drawback of this argument is the fact that, under sampling, the convergence of this algorithm cannot be ensured without knowledge of a bound of the initial condition \cite{hernan}. Furthermore, the assumption of a finite control authority limits the maximal velocity with which the trajectories of the system are attracted to the desired set.

Then, one can always use some bounded controller to ensure the finite-time attractiveness of Div. 3, at which time one can consider a new initial time. Thus, a modification to \eqref{eq:BFA} is presented to fulfill the second item of Assumption \ref{ass:sys1}.

\begin{equation}\label{eq:BFSAT}
	u=\left\{\begin{array}{cc}
		-c_1\text{sign}(x)& x\in \text{Div 1.}\cup\text{Div. 2.}\\
		-\frac{1}{\varepsilon-|x|}x&x\in \text{Div 3.}\cup\text{Div. 4.}
		
	\end{array}\right.
\end{equation}

Equation \eqref{eq:BFSAT} represents a scheme where a constant signal is used to drive the trajectories of the system to Div. 3 as fast as possible under the considered conditions. 

Then, the following theorem deals with convergence and boundedness of the trajectories of \eqref{eq:sys}-\eqref{eq:BFSAT}.
\begin{theo}\label{theo:sppp}
	Consider \eqref{eq:sys}-\eqref{eq:BFSAT},  under uniform sampling and let Assumption \ref{ass:sys1} hold. If there exists some $\gamma>0$ such that $|x_0|\leq \gamma$, and $\tau<\varepsilon\left(\frac{1}{g_2c_1+1}\right)^2$, then after a finite number of samples, $l$, it follows that $x_k\in$Div. 3 for any $k>l$. Thus, $x(t)\in$Div. 4 is achieved at least asymptotically. 
\end{theo}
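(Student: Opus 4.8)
The plan is to exploit the two-stage structure of the switched controller~\eqref{eq:BFSAT}: a constant saturated effort $-c_1\,\text{sign}(x)$ that steers the state into the controllable region Div.~3 $\cup$ Div.~4 in finitely many samples, after which the barrier law takes over and its behaviour is already settled by Proposition~\ref{prop:sppp}. By the horizontal symmetry invoked before Proposition~\ref{prop:sppp}, I would assume $x_0>0$ without loss of generality.

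First I would analyse the \emph{reaching phase}. While $x_k\in$ Div.~1 $\cup$ Div.~2, i.e. $x_k>\frac{c_1}{c_1+1}\varepsilon$, the applied input is $u_k=-c_1$ and the sampled closed loop reads $x_{k+1}-x_k=\int_{t_k}^{t_k+\tau}g(s)\bigl(\delta(s)-c_1\bigr)\,ds$. Since $|\delta|\le\bar\delta<c_1$ under Assumption~\ref{ass:sys1}, the integrand is strictly negative, so $x$ decreases monotonically with a guaranteed per-sample decrement $x_{k+1}-x_k\le -g_1\tau(c_1-\bar\delta)<0$. Combined with the bound $|x_0|\le\gamma$, this fixed minimum decrement caps the reaching time: after at most $l\le\bigl\lceil \gamma\,[\,g_1\tau(c_1-\bar\delta)\,]^{-1}\bigr\rceil$ samples, $x_k$ first drops below $\frac{c_1}{c_1+1}\varepsilon$. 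Throughout this phase the state stays positive and above that threshold, so $\text{sign}(x_k)$ is constant and no change of $u_k$ occurs.

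The crux of the argument, and the step I expect to be the main obstacle, is to rule out an \emph{overshoot} of the controllable region into the negative unstable band during the crossing sample. At that sample $x_k>\frac{c_1}{c_1+1}\varepsilon$, while the largest admissible displacement is $|x_{k+1}-x_k|\le g_2\tau(c_1+\bar\delta)$, whence $x_{k+1}>\frac{c_1}{c_1+1}\varepsilon-g_2\tau(c_1+\bar\delta)$. I would then show that the hypothesis $\tau<\varepsilon\bigl(\frac{1}{g_2c_1+1}\bigr)^2$ forces this lower bound to exceed $-\frac{c_1}{c_1+1}\varepsilon$, so that $|x_{k+1}|\le\frac{c_1}{c_1+1}\varepsilon$ and hence $x_{k+1}\in$ Div.~3 $\cup$ Div.~4. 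This is delicate precisely because it is the step-size estimate that couples $\varepsilon$, $\tau$, $c_1$ and the gain bounds; the natural route is to crudely absorb the displacement using $\bar\delta<c_1$ (so $c_1+\bar\delta<2c_1$) and match the resulting requirement on $g_2\tau$ against the stated bound on $\tau$.

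Finally, once $x_{k+1}\in$ Div.~3 $\cup$ Div.~4, the controller switches to the barrier law and Proposition~\ref{prop:sppp} applies verbatim under the same condition on $\tau$: Div.~4 is positively invariant and the trajectory cannot leave Div.~3 $\cup$ Div.~4, so the switching back to the saturated branch never recurs. Consequently $x_k$ remains in the controllable region for all $k>l$, the monotone contraction inside Div.~3 established by Proposition~\ref{prop:sppp} drives the state into Div.~4, and $x(t)\in$ Div.~4 is attained at least asymptotically, which is the claimed conclusion.
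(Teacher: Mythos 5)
Your proposal follows essentially the same route as the paper's proof: a saturated reaching phase with a guaranteed per-sample decrement yielding a finite sample count, a step-size argument ruling out overshoot past the controllable region at the crossing sample, and a handoff to Proposition~\ref{prop:sppp} once the state lands in Div.~3\,$\cup$\,Div.~4. Your count $l\le\lceil \gamma/(g_1\tau(c_1-\bar\delta))\rceil$ is the explicit form of the paper's $l=\lfloor d/\varsigma\rfloor+1$ (the paper introduces $\varsigma$ as a lower bound on $c_1+\delta$ but then uses it as a per-sample displacement; your version, which carries the $g_1\tau$ factor, is the cleaner one). The only substantive difference is in the overshoot step: the paper invokes the displacement bound $|x_{k+1}-x_k|<\varepsilon\left(\frac{\bar\delta}{1+\bar\delta}+\frac{c_1}{c_1+1}\right)$ from the proof of Proposition~\ref{prop:sppp}, i.e.\ it forbids passing beyond the far boundary of Div.~4, whereas you only require landing in $\left[-\frac{c_1}{c_1+1}\varepsilon,\frac{c_1}{c_1+1}\varepsilon\right]$, i.e.\ anywhere in Div.~3\,$\cup$\,Div.~4, which is all that Assumption~\ref{ass:ic} demands for the handoff. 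After absorbing $c_1+\bar\delta<2c_1$, your requirement becomes $g_2\tau\le\varepsilon/(c_1+1)$, which is strictly weaker than the condition $g_2\tau<\varepsilon/(c_1+1)^2$ that the paper's own derivation in Proposition~\ref{prop:sppp} needs.

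One caveat on the step you yourself flagged as the main obstacle: it cannot be closed exactly as you state it. The hypothesis $\tau<\varepsilon/(g_2c_1+1)^2$ implies $g_2\tau\le\varepsilon/(c_1+1)$ only when $g_2(c_1+1)\le(g_2c_1+1)^2$, which fails, e.g., for $g_2=10$, $c_1=0.1$. However, this is not a defect of your argument relative to the paper: the theorem's stated bound likewise fails to imply the paper's needed condition $g_2\tau<\varepsilon/(c_1+1)^2$ unless $(g_2-1)(g_2c_1^2-1)\ge0$ (the two bounds coincide at $g_2=1$), and indeed your weaker requirement holds in strictly more parameter regimes (for instance, for every $g_2\le1$, where the paper's can still fail, e.g.\ $g_2=0.5$, $c_1=5$). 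So your proof is faithful to, and no less rigorous than, the paper's; the residual mismatch between the stated $\tau$-bound and what the crossing estimate actually needs is inherited from the paper itself.
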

\begin{proof}
	Let us assume $x_0\in\text{Div. 1}\cup\text{Div. 2,}$ otherwise the proofs follows directly from Proposition \ref{prop:sppp}.
	Then, the distance from $x_0$ to the boundary of Div. 3 is nonzero. Let us call this distance $d$.
	
	From Assumption \ref{ass:sys1}, $\bar \delta<c_1$, and so, there exists $\varsigma>0$ such that $c_1+\delta(t,x)\geq \varsigma$. This, along with the fact that the controller is in the saturation zone implies that sign$(x_{x})\int_{t_k}^{t_{k+1}}g(s,x(s))\left(\frac{\delta(s,x(s))}{\text{sign}(x_k)}-c_1\right)ds\leq-\varsigma\text{sign}(x_k)$, meaning that $x(t)$ moves away form $x_k$ and towards Div. 3. 
	
	The proof of Proposition \ref{prop:sppp} shows that $\tau<\varepsilon\left(\frac{1}{g_2c_1+1}\right)^2$ implies $|x_{k+1}-x_k|<\varepsilon\left(\frac{\bar\delta}{1+\bar\delta}+\frac{c_1}{c_1+1}\right)$, which means that the trajectories of the system cannot move through Div. 3 and past Div. 4 in one sampling interval.
	
	Therefore, after a number of samples no more than $l=\lfloor \frac{d}{\varsigma}\rfloor+1$, $x_l\in$ Div. 3$\cup$Div. 4 must be fulfilled. Consider $x_{l+1}$ as a new initial time such that $x_{l+1}$ fulfills Assumption \ref{ass:ic}. Then, the proof follows from Proposition \ref{prop:sppp}.
		\end{proof}
In this paper there are issues such as measurement noise and quantization that have not been taken into account and might cause the trajectories of the system to leave the barrier. Nonetheless, \eqref{eq:BFSAT} will not change signs, unlike \eqref{eq:BFA}, and will ensure that the system trajectories can re-converge, without the need to wait for another control law to converge.

\section{Simulation results}
In order to illustrate the results given in the last three sections, several simulations will be shown.
To approximate the continuous time behavior of the plant, a two loop discretization scheme is proposed wherein the plant will be sampled at a steady rate of $1\times10^{-6}$ seconds, while the controller is sampled at a slower rate. 

\subsection{Unsolvability of the PPP under sampling}

To show that there always exists a finite value of $\delta(t)$ which makes $x(t)\not\in\mathcal{E}$ after one sampling interval, fix $\tau=0.01$ and $g_2=g_1=1$. Then, according to Proposition \ref{prop:del}, one can consider the value $\delta(t)=\varpi\text{sign}(x_{k-1})$, with $\varpi=\frac{\varepsilon-x_0}{\tau}=1.5$, such that the trajectories of the system leave the predefined set in one sampling interval. This is shown in Figure \ref{fig:pert}.

\begin{figure}
	\begin{center}
		\includegraphics[width=\linewidth]{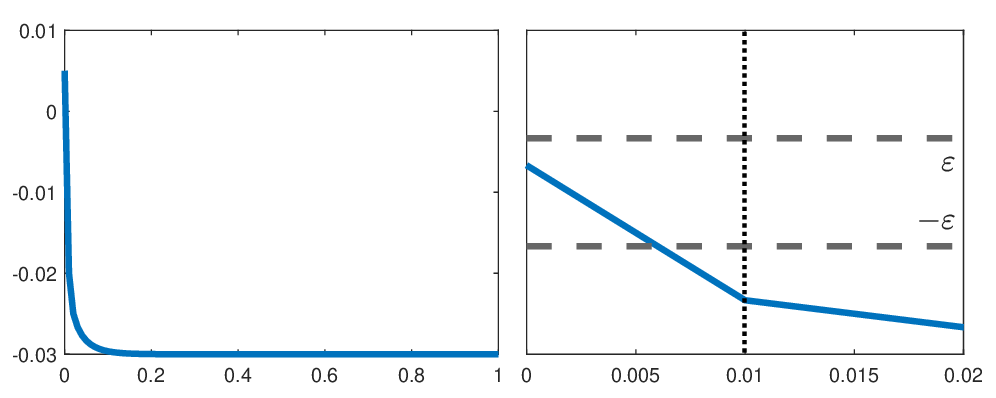}
		\caption{Simulations for a perturbed integrator with $\delta(t)$ as in Proposition \ref{prop:del}. (left) behavior of the system for 1 second (right) behavior of the system for two sampling intervals; the solution leaves the predefined set before the first sampling interval}\label{fig:pert}
	\end{center}
\end{figure}


\subsection{Behavior of sampled BFA}
It will be shown that, under the conditions given by the PPPS, Proposition \ref{prop:sppp} means that the BFASMC not only ensures that the system trajectories are constrained to the barrier set, but that the trajectories of the system lie within a final set defined by Div. 4 and that this fact is achieved with a bounded control signal if $\tau$ is chosen appropriately. 

The parameters considered within these simulations are: $\varepsilon=0.01$, $x_0=0.005$, $c_1=5$, the perturbation and input gain are given by: 
\begin{equation*}
\begin{array}{rl}
	\delta(t,x)=&\bar\delta (0.7\cos(10t)+0.3\text{ sign}(\cos(\sqrt{2}t)))\\
	g(t,x)=&g_1+(g_2-g_1)\frac{1+\text{sign}(\sin (3\pi t))}{2}
\end{array}
\end{equation*}
with $g_1=1$, $g_2=1.5$ and $\bar\delta=4.4$.

Then,  the selection of $\tau$ should satisfy $\tau\leq 1.44\times 10^{-4}$. For this case a value of $\tau= 1.38\times 10^{-4}$ is considered.

Figure \ref{fig:samg1} shows the behavior of the system. Not only are the trajectories of the system confined to the barrier set, but the final set is positively invariant, \textit{i.e.} the trajectories cannot escape this set once they have entered it; additionally, the control effort has no chattering and  not only does it remain bounded, but the saturation region is avoided.

\begin{figure}
	\begin{center}
		\includegraphics[width=\linewidth]{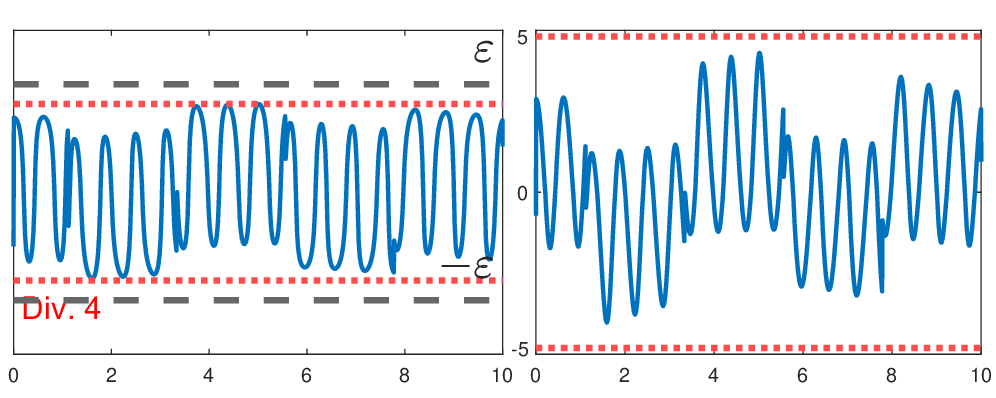}
		\caption{Response of a first order system under sampling. (left) The trajectories of the system are kept within Div. 4 (right) The controller achieves the task without crossing the limits for the control effort }\label{fig:samg1}
	\end{center}
\end{figure}

To show the advantages of the BFASMC against a linear controller, consider $g_1=1$, $g_2=1$, $\varepsilon=0.5$, $\tau=0.0062$, $c_1=7.5$ and $d_b=3$. Then, the controller $u_k=-kx_k$, with $k=17$, such that for $x_k\in\partial$Div. 3 $u_k=c_1$ will be considered.

Figure \ref{fig:lvb} shows the comparison between the linear controller and the BFASMC. The right plot shows the states; while the controller with constant gains achieves a better precision, both controllers render Div. 4 positively invariant and ensure the predefined ultimate performance, which is the goal of any predefined performance controller. The left plot shows the control effort; the controller with constant gains uses a bigger control effort to solve the PPPS, and tracks the perturbation more closely. Let us rewrite \eqref{eq:BFA} as $u=-k(x)x$ such that the similarities to the linear controller become clearer. Then, Figure \ref{fig:gains} show the magnitude of the nonlinear gain, $k(x)$, compared to the fixed gain of the linear controller. Note that the BFASMC gain is always less than half than the constant gain, using only the actuator capacity for the design, \textit{i.e.} no explicit knowledge of the perturbation is required. Thus, the issue of gain overestimation of the linear controller is solved, at least partially, by the BFASMC. 


\begin{figure}
	\begin{center}
		\includegraphics[width=\linewidth]{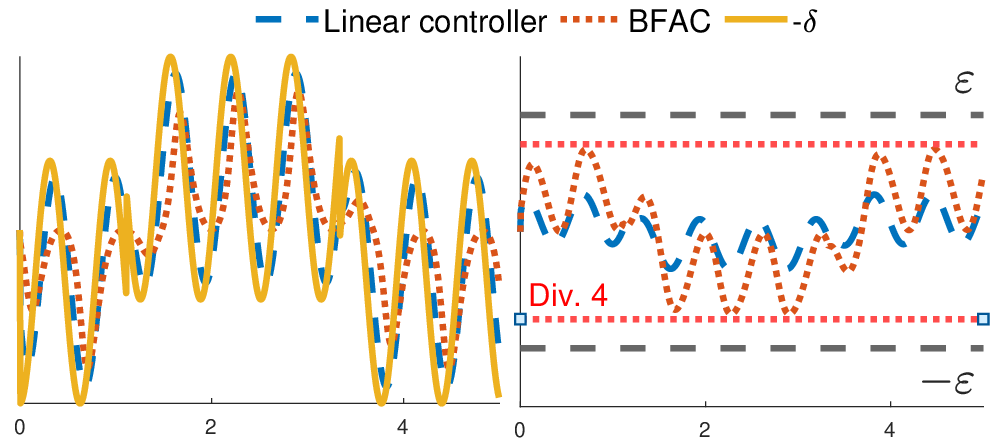}
		\caption{Comparison between a saturated linear constant gain controller and the BFASMC. (left) Control effort. (right) Trajectories of the system }\label{fig:lvb}
	\end{center}
\end{figure}
\begin{figure}
	\begin{center}
		\includegraphics[width=\linewidth]{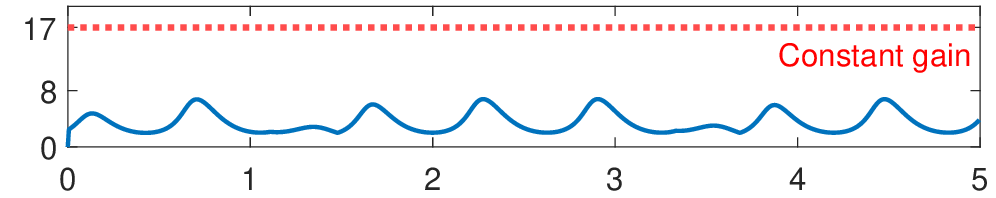}
		\caption{Comparison between the gain of the BFASMC and the constant gain of the linear controller}\label{fig:gains}
	\end{center}
\end{figure}

%
\subsection{Finite-time convergence to the barrier set}
To show the applicability of the proposed reaching controller, simulations were carried out considering two different initial conditions $x_0=\{0.201, 201\}$. Figure \ref{fig:ic} shows that, regardless of the different values of these initial conditions, the trajectories of the system are driven in a finite time  to Div. 3, in which case the analysis and comments made apply. 
\begin{figure}
	\begin{center}
		\includegraphics[width=\linewidth]{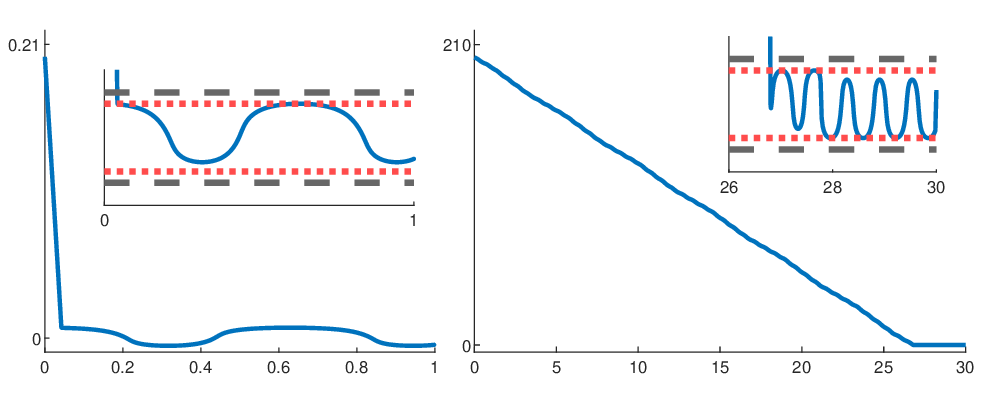}
		\caption{Trajectories with different initial conditions (left) $x_0=0.201$ (right) $x_0=201$}\label{fig:ic}
	\end{center}
\end{figure}
\section{Conclusions}
This paper has analyzed the behavior of predefined performance controllers, particularly BFASMC, under sampled-data implementations. It has been demonstrated that the conventional continuous-time theoretical framework, which assumes instantaneous feedback and infinite control authority, does not guarantee a predefined ultimate bound when applied to sampled systems. To reconcile theory with practical constraints, a revised framework has been proposed, explicitly incorporating actuator capacity limits, sampling intervals, and disturbance bounds. This reformulation establishes a critical trade-off between the sampling rate, the size of the predefined ultimate bound, and the maximal admissible disturbance, providing a structured tuning methodology for controllers in real-world applications.

Within this framework, the empirically observed success in digital implementations of BFASMC has been theoretically justified by demonstrating how its barrier-function adaptation mechanism inherently regulates gain magnitudes, mitigates chattering, and compensates for sampling-induced open-loop effects. Furthermore, a novel finite-time reaching law has been introduced, leveraging its structural properties under uniform sampling to ensure predefined-time convergence to a positively invariant final set.

A key contribution of this work is the derivation of an explicit relationship between the actuator capacity, the sampling interval, and the width of the barrier function. This result effectively provides a systematic tuning procedure for barrier-function-based sliding-mode controllers in sampled-data implementations, addressing a critical gap in the literature. This is the first time such a tuning methodology has been formally established, offering a principled way to design these controllers based on known system constraints.

These contributions collectively bridge the gap between classical predefined performance theory and practical sampled-data implementations, laying the foundation for future developments in barrier-function-based sliding-mode control for digital and embedded systems.

\bibliographystyle{ieeetran}
\bibliography{bibliografia_adaptable}

\begin{thebibliography}{10}
\providecommand{\url}[1]{#1}
\csname url@samestyle\endcsname
\providecommand{\newblock}{\relax}
\providecommand{\bibinfo}[2]{#2}
\providecommand{\BIBentrySTDinterwordspacing}{\spaceskip=0pt\relax}
\providecommand{\BIBentryALTinterwordstretchfactor}{4}
\providecommand{\BIBentryALTinterwordspacing}{\spaceskip=\fontdimen2\font plus
\BIBentryALTinterwordstretchfactor\fontdimen3\font minus
  \fontdimen4\font\relax}
\providecommand{\BIBforeignlanguage}[2]{{%
\expandafter\ifx\csname l@#1\endcsname\relax
\typeout{** WARNING: IEEEtran.bst: No hyphenation pattern has been}%
\typeout{** loaded for the language `#1'. Using the pattern for}%
\typeout{** the default language instead.}%
\else
\language=\csname l@#1\endcsname
\fi
#2}}
\providecommand{\BIBdecl}{\relax}
\BIBdecl

\bibitem{PPC}
C.~P. Bechlioulis and G.~A. Rovithakis, ``Adaptive control with guaranteed
  transient and steady state tracking error bounds for strict feedback
  systems,'' \emph{Automatica}, vol.~45, no.~2, pp. 532--538, 2009.

\bibitem{rodrigues2022adaptive}
V.~H.~P. Rodrigues, L.~Hsu, T.~R. Oliveira, and L.~M. Fridman, ``Adaptive
  sliding mode control with guaranteed performance based on monitoring and
  barrier functions,'' \emph{International Journal of Adaptive Control and
  Signal Processing}, vol.~36, no.~6, pp. 1252--1271, 2022.

\bibitem{BF1}
H.~Obeid, L.~M. Fridman, S.~Laghrouche, and M.~Harmouche, ``Barrier
  function-based adaptive sliding mode control,'' \emph{Automatica}, vol.~93,
  pp. 540--544, 2018.

\bibitem{Plestan}
F.~Plestan, Y.~Shtessel, V.~Begeault, and A.~Poznyak, ``New methodologies for
  adaptive sliding mode control,'' \emph{International Journal of Control},
  vol.~83, no.~9, pp. 1907--1919, 2010.

\bibitem{shtessel2}
Y.~Shtessel, F.~Plestan, C.~Edwards, and A.~Levant, ``Adaptive sliding mode and
  higher order sliding-mode control techniques with applications: A survey,''
  in \emph{Sliding-Mode Control and Variable-Structure Systems: The State of
  the Art}.\hskip 1em plus 0.5em minus 0.4em\relax Springer, 2023, pp.
  267--305.

\bibitem{TAC}
A.~González, L.~Ovalle, and L.~Fridman, ``Final set adjustment in barrier
  function adaptation exploiting properties of signed power-based
  controllers,'' \emph{IEEE Transactions on Automatic Control}, pp. 1--6, 2024.

\bibitem{Moreno}
D.~Y. Negrete-Chávez and J.~A. Moreno, ``Second-order sliding mode output
  feedback controller with adaptation,'' \emph{International Journal of
  Adaptive Control and Signal Processing}, vol.~30, no. 8-10, pp. 1523--1543,
  2016.

\bibitem{utkin}
V.~I. Utkin and A.~S. Poznyak, ``Adaptive sliding mode control with application
  to super-twisting algorithm: Equivalent control method,'' \emph{Automatica},
  vol.~49, no.~1, pp. 39--47, 2013.

\bibitem{edwards}
C.~Edwards and Y.~B. Shtessel, ``Adaptive continuous higher order sliding mode
  control,'' \emph{Automatica}, vol.~65, pp. 183--190, 2016.

\bibitem{boiko}
I.~Boiko, \emph{Discontinuous Control Systems Frequency-Domain Analysis and
  Design}.\hskip 1em plus 0.5em minus 0.4em\relax Springer, 2009.

\bibitem{ID2}
Y.~Shtessel, M.~Taleb, and F.~Plestan, ``A novel adaptive-gain supertwisting
  sliding mode controller: Methodology and application,'' \emph{Automatica},
  vol.~48, no.~5, pp. 759--769, 2012.

\bibitem{Gian}
G.-P. Incremona, M.~Cucuzzella, and A.~Ferrara, ``Adaptive suboptimal
  second-order sliding mode control for microgrids,'' \emph{International
  Journal of Control}, vol.~89, no.~9, pp. 1849--1867, 2016.

\bibitem{BFMAN}
S.~Mobayen, K.~A. Alattas, and W.~Assawinchaichote, ``Adaptive continuous
  barrier function terminal sliding mode control technique for disturbed
  robotic manipulator,'' \emph{IEEE Transactions on Circuits and Systems I:
  Regular Papers}, vol.~68, no.~10, pp. 4403--4412, 2021.

\bibitem{BFNNSV}
Y.~Yan, X.~Zhao, S.~Yu, and C.~Wang, ``Barrier function-based adaptive neural
  network sliding mode control of autonomous surface vehicles,'' \emph{Ocean
  Engineering}, vol. 238, p. 109684, 2021.

\bibitem{mousavi2023barrier}
A.~Mousavi, A.~H. Markazi, and A.~Ferrara, ``A barrier function-based second
  order sliding mode control with optimal reaching for full state and input
  constrained nonlinear systems,'' \emph{IEEE Transactions on Automatic
  Control}, 2023.

\bibitem{BFLM}
K.~Shao, J.~Zheng, H.~Wang, X.~Wang, R.~Lu, and Z.~Man, ``Tracking control of a
  linear motor positioner based on barrier function adaptive sliding mode,''
  \emph{IEEE Transactions on Industrial Informatics}, vol.~17, no.~11, pp.
  7479--7488, 2021.

\bibitem{BFPZE}
H.~Ma, W.~Liu, Z.~Xiong, Y.~Li, Z.~Liu, and Y.~Sun, ``Predefined-time barrier
  function adaptive sliding-mode control and its application to piezoelectric
  actuators,'' \emph{IEEE Transactions on Industrial Informatics}, vol.~18,
  no.~12, pp. 8682--8691, 2022.

\bibitem{Diego}
C.~Cruz-Ancona, L.~Fridman, H.~Obeid, S.~Laghrouche, and C.~P{\'e}rez-Pinacho,
  ``A uniform reaching phase strategy in adaptive sliding mode control,''
  \emph{Automatica}, vol. 150, p. 110854, 2023.

\bibitem{hernan}
H.~Haimovich, R.~Aldana-Lopez, R.~Seeber, and D.~Gomez-Gutierrez,
  ``Implementing prescribed-time convergent control: sampling and robustness,''
  \emph{IFAC-PapersOnLine}, vol.~56, no.~2, pp. 1621--1626, 2023, 22nd IFAC
  World Congress.

\end{thebibliography}
\end{document}